\documentclass{amsart}
\usepackage{amsmath,amssymb,amsfonts,amsthm}
\usepackage{amssymb}
\usepackage{pgf}
\usepackage{todonotes}
\usepackage{epsfig}
\usepackage{mathrsfs}
\usepackage{verbatim}
\usepackage[gen]{eurosym} 
\usepackage{subfig}

\newtheorem{theorem}{Theorem}[section]

\newtheorem{proposition}[theorem]{Proposition}

\theoremstyle{definition}

\newtheorem{example}[theorem]{Example}

\theoremstyle{remark}

\newcommand{\n}{\!\!\ \nn \!\!\ }

\def\nn{\mathrel{%
    \mathchoice{\n}{\n}{\scriptsize\n}{\tiny\n}%
}}
\def\n {{%
    \setbox0\hbox{ \textsf{n}\!\  }%
    \rlap{\hbox to \wd0{\hss \!\ --\hss}}\box0
}}

\numberwithin{equation}{section}

\newcommand{\R}{\mathbb{R}}

\renewcommand{\P}{\mathbf{P}}

\newcommand{\ua}{\uparrow}
\newcommand{\da}{\downarrow}
\hyphenation{ortho-normal}



\begin{document}

\title[Option pricing simplified]{Option pricing: A yet simpler approach}

\author{Jarno Talponen \and Minna Turunen}
\address{University of Eastern Finland, Department of Physics and Mathematics, Box 111, FI-80101 Joensuu, Finland}
\email{talponen@iki.fi \and minna.turunen@iki.fi}

\keywords{Derivatives, Lattice model, CRR model, backward process\\
JEL: G13, C61}

\date{\today}

\begin{abstract}
We provide a lean, non-technical exposition on the pricing of path-dependent and European-style derivatives in the Cox-Ross-Rubinstein (CRR) pricing model. The main tool used in the paper for cleaning up the reasoning is applying static hedging arguments.

This can be accomplished by taking various routes through some auxiliary considerations, namely Arrow-Debreu securities, digital options or backward random processes. In the last case the CRR model is extended to an infinite state space which leads to an interesting new phenomenon not present in the classical CRR model. 

At the end we discuss the paradox involving the drift parameter $\mu$ in the BSM model pricing. We provide sensitivity analysis and the speed of converge for the asymptotically
vanishing drift.
\end{abstract}

\maketitle

\section{Introduction}

In this paper we provide a transparent and financially tractable approach to verifying financial derivatives 
pricing formulas in a lattice model. 

The derivatives pricing model originated in the seminal papers of Black and Scholes (1973) and Merton (1973) (BSM) is the corner stone of modern derivatives pricing. Understanding their approach fully requires some rather involved mathematical machinery. In an attempt to alleviate this burden, Cox, Ross and Rubinstein (1979) (CRR) introduced a lattice model which approximates the BSM prices with a very rapid rate of convergence as the number of time steps grows (see e.g. Leisen and Reiner 1996). Understanding the CRR model requires considerably less mathematical sophistication than the BSM model.

The celebrated Cox-Ross-Rubinstein binomial option pricing formula states that the price of an option is
\begin{equation}\label{eq: CRRformula}
C_f(0)=\frac{1}{(1+r)^T}\sum_{x=0}^T f\left(S_0(1+u)^x(1+d)^{T-x}\right)\left(\binom{T}{x}q^x(1-q)^{T-x}\right).
\end{equation}
where $f$ denotes the payoff of the European style derivative at maturity, $T$ denotes the time steps to maturity and $r$  is the risk-free interest rate corresponding to each time step, and $q$ can be easily calculated from the parameters of the model.  

There is a vast literature of lattice models in finance. Lattice models inspired by the CRR model have been applied e.g. to financial derivatives pricing (Babbs 2000), state price density estimation by implied trees (Rubinstein 1994), real options valuation (Nembhard et al. 2002, 2003), investment science, hybrid securities (Das and Sundaram 2007, Gamba and Trigeorgis 2007), and term structure models (Heath et al. 1990). Here we also study the implications of extending the state space in the CRR binomial model. Previously, the CRR model has been extended in various manners, for example, by Boyle (1988) to value options with several state variables, by Broadie and Detemple (1996) to value American style options, or by Hull and White (1993) and Kascheev (2000) to value path-dependent options.

The CRR model is easy to grasp in principle, and thus the apparently more complicated BSM model can be understood as well by extension, since it can be seen as an asymptotic limit of CRR models. Unfortunately, the crucial step in the CRR paper, where their main pricing formula is actually justified, is swept under the rug; after discussing the first two steps Cox et al. state that they ``now have a recursive procedure for finding the value of a call with any number of periods to go'' (1979, p. 238)\footnote{Hull (2015) takes essentially the same approach. 
F\"{o}llmer and Schied (2011) develop rigorously the machinery in Ch.5 with martingales.}. The required backward substitution calculations become lengthy, especially for a general path-dependent payoff $f$, even if the idea is simple in principle. Although the CRR model was introduced as a simplified version of the BSM model, and well succeeds in that, some steps of the calculations remain not that transparent at first glance, say, to a student.

We have not been able to find a \emph{lean} argument for the CRR pricing formula \eqref{eq: CRRformula} in the quantitative finance literature. The rigorous arguments there become usually somewhat complicated, they require probability-theory, e.g. martingales, and the financial intuition may easily be lost in the details.

Consequently, there is a rough passage starting with rudimentary considerations to the financial understanding of the BSM model. Our aim here is to provide a fix to this `gap' in the story, essentially by using static hedging arguments. Also we hope that our method makes the CRR model somewhat more approachable, especially from a pedagogic point of view. Understanding our approach does not require such an extensive knowledge of probability theory.

Thus, the main contribution of this paper is not a novel result but rather we will give a lean, financially 
oriented argument for both the classical European-style derivative pricing formula and the general path-dependent option price formula in the CRR model. We will apply Arrow-Debreu type securities (Arrow and Debreu 1954) and digital options as convenient intermediate notions towards verifying the CRR pricing formulas. These securities are financially well motivated since they can be considered as natural building blocks for other financial derivatives, and in our case, especially options. The Arrow-Debreu securities are not actively traded in the real market
but digital options are. Even if the Arrow-Debreu securities are not traded by themselves, traded structured products plausibly consist of such securites. Thus these securities appear more tractable than their alternative, risk-neutral probability densities.


This paper is organized as follows. First, we recall the binomial model and explain various types of atomic building blocks in our model. We show how the prices of Arrow-Debreu (AD) securities, that is, kind of elementary options on 
particular trajectories of the underlying security prices, arise in a rather simple way. Then we obtain the path-dependent derivative prices by suitably aggregating these AD securities. It turns out that the classical European style derivatives pricing formula 
follows easily by aggregating binary options. These, in turn, are aggregated from AD securities, or, alternatively, can be priced by means of a simple backward random walk in an extended state space. It turns out that in the case of an extended infinite state space the discounted value processes exhibit an interesting aggregate time invariance, not present in the standard binomial model. At the end of the paper we discuss the irrelevance of the trend parameter $\mu$
in the BSM pricing which is a bit of a paradox. 

We have made an effort to explain carefully the strategy behind the pricing of general financial derivatives in the CRR
model without resorting to unnecessary technical machinery. Instead of fictitious risk-neutral probabilities we mainly 
consider financially tractable elementary securities. In particular, Section \ref{Subsection:path-dependentAD} hopefully serves as an `executive summary' on the CRR pricing principles which essentially entail all the financial reasoning behind the BSM model.

\subsection{Preliminaries}

Although we do not assume knowledge of lattice models in-depth, we expect some familiarity 
with related financial literature. For a suitable background information see, for example, the monographs by Copeland and Weston (1992), Luenberger (1998), or Hull (2015), cf.  F{\"o}llmer and Schied (2011), van der Hoek and Elliot (2006). 
\subsubsection{Some notations} The indicator function becomes a very useful notion here, flexibly defined, 
\[1_{C}\] 
means a function which has value $1$ if the subscript condition $C$ is valid and otherwise has value $0$.
The underlying asset's price at time $t$ is denoted by $S_t$. We denote by $f$ the payoff of some financial derivative of interest. It encodes the information of the payment of the derivative. For instance, a European-style call has a time $T$ payoff of the form
\[f(\omega)=\max(S_T(\omega) - K , 0)\]
and a barrier put option payoff may have the form
\[f(\omega) = 1_{\{\max_{0\leq t\leq T} S_t(\omega) \geq L\}} \max(K-S_T(\omega) , 0).\]

\subsection{The basic binomial model}

Although we will not require much probability theory here, let us just mention that our technical setup is a binomial model 
$(B,S,\Omega,\mathcal{F},\mathbb{F},\mathbf{P})$ where 
\[\Omega=\{\omega=(\theta_1 , \theta_2 , \ldots ,\theta_T)\colon \theta_t =0,1\}\] 
is the sample space, $\mathcal{F}$ is a $\sigma$-algebra representing the set of events (here we can choose $\mathcal{F}$ to be the collection of all subsets of $\Omega$), $\mathbb{F}$ is a filtration and $\mathbf{P}$ is a probability measure. 

The (nominal) value of the underlying asset at time $t=0,1,\ldots , T$ is denoted by $S_t$. 
In a binomial model, $S:\Omega\times\{1,\ldots,T\} \to \R$ is a random variable with two possible outcomes `up' ($1$) and `down' ($0$), at each step, so that the possible nominal values of $S_{t+1}$ are $S_t (1+u)$ and $S_t (1+d)$. It is assumed that $d<0<r<u$, where $r$ is a constant short interest rate and 
\[(1+d)(1+u)=1,\]
so that the binomial tree is recombining. 
The reasonable choices of $d,r$ and $u$ depend on the length of the time steps. The probabilities are defined as follows:
\[S_{t+1} = S_t (1+d+\theta_{t+1}(u-d))\]
where $\theta_i$, $1\leq i\leq T$, are i.i.d with $\P(\theta_i =1)=p$ and  $\P(\theta_i =0)=1-p$ for some given $0<p<1$. The Figure \ref{Figure:Binomialmodel}  illustrate the binomial model in Log and real scale.

{
\centering
\begin{figure}[h!!]
     \centering
     \subfloat{\includegraphics[width=0.5\linewidth]{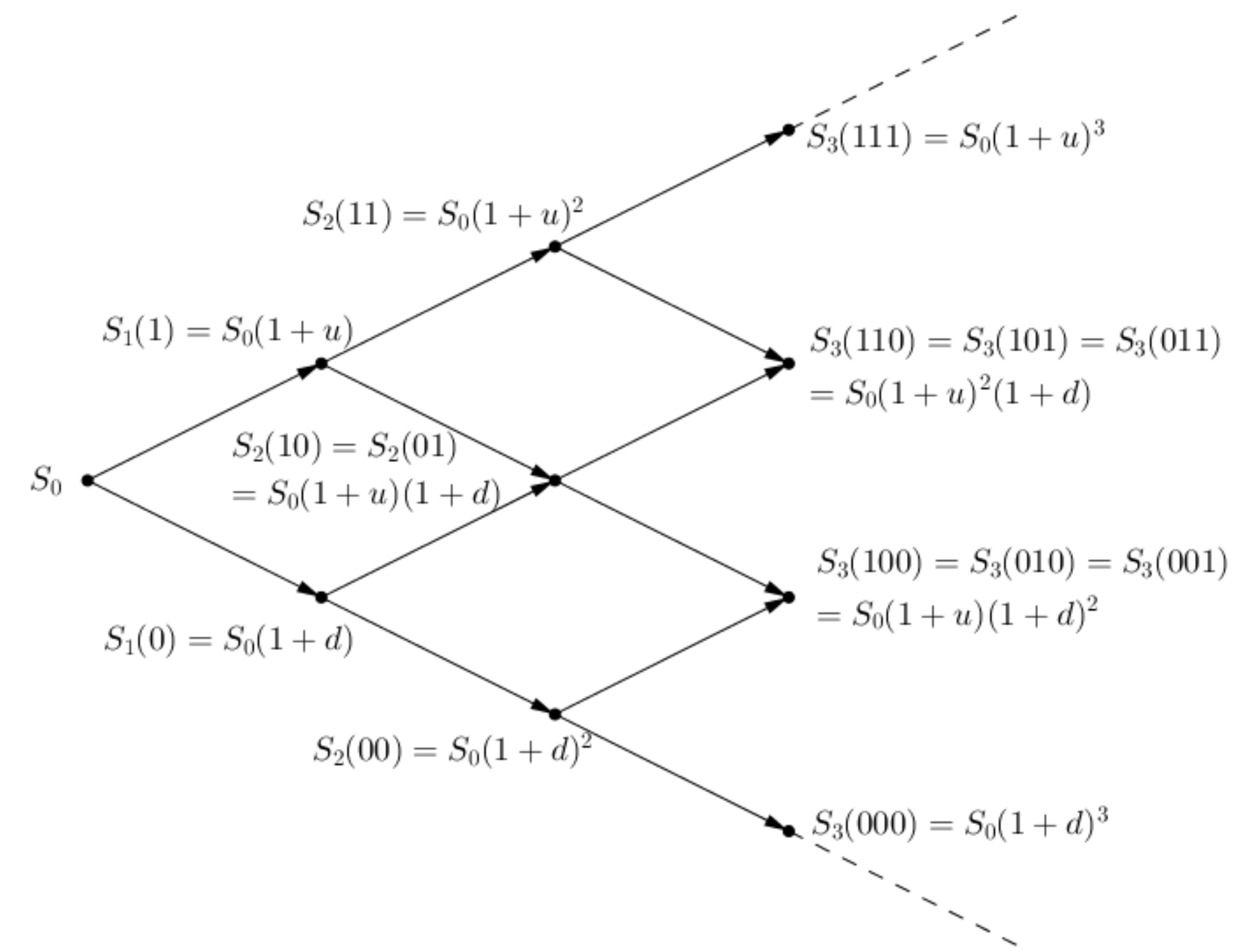}}
     \subfloat{\includegraphics[width=0.5\linewidth]{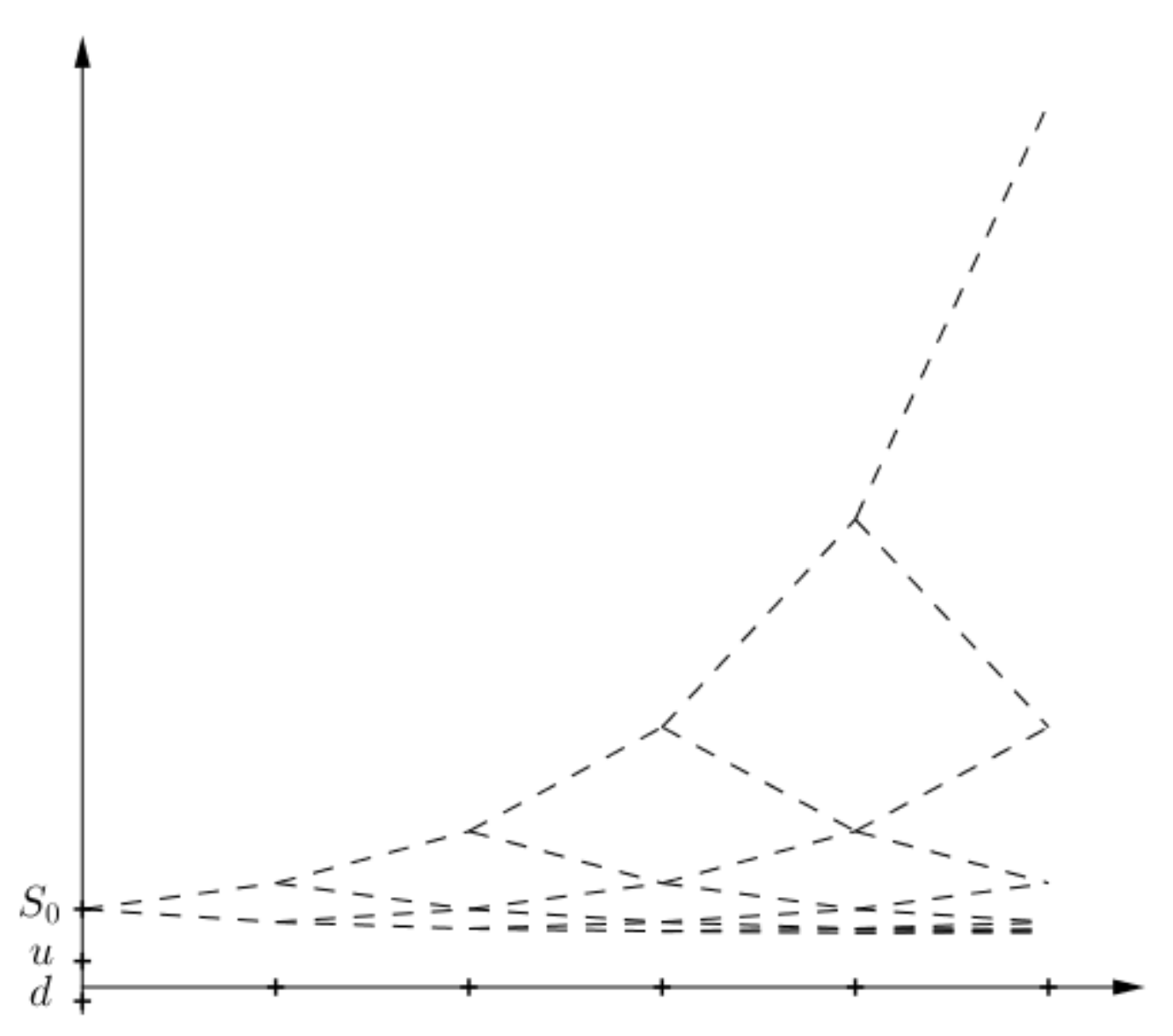}}
     \caption{The binomial model drawn in Log-scale and in real scale.}
     \label{Figure:Binomialmodel}

\end{figure}
}

The sample space essentially consists of all possible trajectories of $S$, see also Figure \ref{figure:valueprocess}.

As usual, $B_t$ denotes the riskless asset with the nominal value
\[B_t = \frac{(1+r)^{t}}{(1+r)^{T}}\]
expressed units of a given currency. This is a zero-coupon bond with face value $1$.

\subsection{Discounted model}
To simplify the arguments, it is customary in the quantitative finance literature `to pass on to a discounted model' where discounted prices appear in place of nominal prices. To perform this transition explicitly, we will do the book keeping in numeraire units. Expressing prices in numeraire terms is a bit like reporting inflation adjusted prices over a time span.
Our numeraire $\n$ incorporates the currency and discounting, and it \emph{depends on time} $t$ as follows:
\[\n \text{value\ of\ bond\ } B\ \text{at time}\ t \quad =\quad 1\ \n_{(T)},\]
or, in short,
\[B_t  \n_{(t)} =B_T \n_{(T)} =\n_{(T)},\]
for all $t$. The left hand numeraire corresponds to time $t$ and the right hand one corresponds to time $T$.
This leads to the following dimension analysis:
\[\frac{\n_{(t+1)}}{\n_{(t)}}=\frac{B_{t}}{B_{t+1} }=\frac{1}{1+r} .\]
That is, the net present value (NPV) of future certain cash flow $x \n$, considered as cash at present time, is 
$x  \n$. Hence the CRR formula can be expressed as
\begin{equation}
\label{eq:CRR1}
C_f(t) \n_{(t)} = \sum_{x=0}^{T-t}\left(\binom{T}{x}q^x(1-q)^{T-t-x}\right)  f\left(S_t (1+u)^x (1+d)^{T-t-x}\right) \n_{(T)} 
\end{equation}
where the left hand numeraire corresponds to cash at time $t$, whereas the right hand numeraire corresponds 
to future payoff at the maturity $T$. More generally, the time subscript corresponds to the value process time, 
so terminal payoff is always expressed in $\n_{(T)}$ and time $t$ value of any security in $\n_{(t)}$.
Following this convention we may suppress the times in subscripts and, for instance, the previous formula becomes simply
\[C_f(t) \n = \sum_{x=0}^{T-t}\left(\binom{T}{x}q^x(1-q)^{T-t-x}\right)  f\left(S_t (1+u)^x (1+d)^{T-t-x}\right) \n.\] 

\section{Streamlined argument for the CRR pricing formulas}\label{Section2}

The aim of this section is to explain the idea of CRR pricing in a transparent manner.

\subsection{Static hedging by Arrow-Debreu securities and digital options}
Static hedging (cf. Derman et al. 1995, Brown and Ross 1991) means synthesizing some required new securities
(or pricing existing ones) by running a buy-and-hold strategy on some existing securities.
The securities included long/short in the replicating portfolio are typically derivatives which are simpler than the 
new synthesized derivative security. If it is possible to construct a portfolio whose value at the maturity of the European style derivative
exactly matches the value of the derivative, then according to `no free lunch' principle, the initial price of the portfolio should
be the same as the price of the new derivative. Indeed, otherwise some very lucrative trading strategies arise where 
one can make money, essentially risk-free and from nothing. These are too good to be true and over some time they should cease to exist due to extensive arbitrage activity. We refer to this sort of economic reasoning as the 
\emph{static hedging principle}. 

We consider here two kinds of elementary derivatives, path-dependent ones, \emph{Arrow-Debreu securities}, and 
path-independent ones, namely \emph{degenerate digital options}. An Arrow-Debreu security's payoff is $1\n$ at the time of the maturity $T$ if the underlying asset evolution follows a given prescribed trajectory $\omega$, and is $0\n$ otherwise. Arrow-Debreu securities may be economically more tractable than risk-neutral probabilities. Neither of them 
are traded directly.

A (degenerate) digital option pays $1\n$ at maturity $T$ if the underlying asset hits a given prescribed `strike price' $K$
at time $T$, and pays $0\n$ otherwise. Digital options are traded and their prices can be estimated from European-style
option prices. It was shown by Breeden and Litzenberger already in (1978) that for plain vanilla calls and puts
there is an elegant model-free way to do this.

\subsection{AD securities in a $1$-step $2$-state model}
\label{Subsection:ADsecurities}
Let us consider Arrow-Debreu derivatives in a $1$-step model with times $0$ and $1$. The $t=0$ state is  
$S_0$ and $t=1$ time possible states are $S_1 = S_0 (1+u) ,\ S_0 (1+d)$.
The payoff functions are
 \[
 f_{AD\ua}=\mathbf{1}_{\{S_1=S_0(1+u)\}} \quad \text{and} \quad f_{AD\da}=\mathbf{1}_{\{S_1=S_0(1+d)\}}
 \]
which are known at time $t=1$.
In other words, the derivative ${AD_\ua}$ pays $1 \n$  when the value of the underlying asset goes up, and respectively, the derivative ${AD_\da}$ pays $1 \n$ when the value of the asset goes down. It turns out that the prices of 
these Arrow-Debreu derivatives at time $t=0$ are
\[
AD_\ua (0) \n=q \n:=\frac{r-d}{u-d} \n \quad \text{and} \quad AD_\da (0)\n=(1-q)\n = \frac{u-r}{u-d}\n.
\]
Note that one may statically hedge the risk-free zero-coupon bond $B$ with $\n$-unit face value by combining these $AD$ securities, since their total payoff is 
\[ f_{AD\ua}\n + f_{AD\da}\n =1\n ,\]
the payoff of the bond, at time $t=1$. Therefore it makes sense that 
\[AD_\ua (0) \n_{(0)} + AD_\da (0)\n_{(0)} = q\n_{(0)} + (1-q)\n_{(0)} =1\n_{(0)} =B_0 \n_{(0)} .\]

So, how to replicate an $AD$ security by a buy-and-hold strategy of assets and bonds where both 
long and short positions are available? To replicate the payoff of the $AD_\ua$ security 
we simply invest at $t=0$ a certain amount of numeraire in the stocks, $a S_0 \n$, and certain amount, $b B_0 \n$, in risk-free bonds. Recall that $B_1 \n_{(1)} = (1+r) B_0 \n_{(1)}$. The payoff replication conditions for ${AD_\ua}$ can be formalized as follows:
\[
  \begin{cases} 
   a S_0 (1+u)\n  + b (1+r) B_0 \n =1\n\\
   a S_0 (1+d)\n + b (1+r) B_0 \n=0\n
  \end{cases}
\]
Without loss of generality we may assume (by splitting assets or bundling them up) that $S_0=1$ above. 
Thus we get 
\[
  \begin{cases} 
   a (1+u) + b(1+r)  =1\\
   a (1+d) + b(1+r)  =0
  \end{cases}
\]
which can be solved by Gaussian elimination or inverting the coefficient matrix. 

However, there is a very natural financial approach to finding the right weights $a$ and $b$.
The variability of the portfolio consisting of stocks and bonds depends \emph{only} on the amount of 
stocks. Thus, $1-0 = a (u-d)$, so that $a=1/(u-d)$. Here $a>0$, since the portfolio, the $AD_\ua$ security and $S$ move in the same direction. Note that in the bearish scenario the time $t=1$ value 
of the bonds (shorted) should be the negative of the value of stocks in the portfolio,
\[b B_1= - a S_0 (1+d)\]
so
\[b =-aS_0 (1+d)/(1+r) = -(1+d)/((u-d)(1+r)),\] 
so that the nominal value of the portfolio is
\[AD_\ua (0)  =(a S_0 + bB_0) =\frac{(1+r)-(1+d)}{(u-d)(1+r)} =\frac{r-d}{(1+r)(u-d)}\]
and
\[AD_\ua (0)  \n_0 = q \n_1 .\]

Similarly we observe that in calculating the replication for $AD_{\da}$ security we must have
$a'=-1/(u-d)$ (same amount of absolute variation as above but this time contrary to the asset movement) and 
$a' (1+u) + b'(1+r)  =0$, thus 
\[b'=-a'(1+u)/(1+r)=(1+u)/((u-d)(1+r)),\] 
\[AD_\da (0)=a' + b'=\frac{-(1+r) + (1+u)}{(u-d)(1+r)}=\frac{u-r}{(u-d)(1+r)}\]
and 
\[AD_\da (0)  \n_0 = (1-q) \n_1 .\]
We may check by static hedging argument that the asset has the price assumed:
\begin{equation*}
\begin{split}
S_0 \n &=  \sum_{x=\da,\ua} f_S (x) AD_x (0)\n \\
&= (1+d) \frac{u-r}{(u-d)(1+r)}\n_{(0)}+ (1+u) \frac{r-d}{(1+r)(u-d)}\n_{(0)}\\
&=\frac{u-r+du-dr+r-d+ur-du}{(1+r)(u-d)}\n_{(0)}=\frac{(1+r)(u-d)}{(1+r)(u-d)}\n_{(0)}=1\n_{(0)}.
\end{split}
\end{equation*}

\subsection{CRR pricing: The path-dependent case simplified}
\label{Subsection:path-dependentAD}

Let us first discuss the pricing of path-dependent Arrow-Debreu derivatives. Suppose that we want to price a path-dependent $AD$ derivative that pays us $1\n$ if the evolution of the underlying asset follows exactly a given trajectory encoded in $\omega$ (a list of ups and downs), and $0\n$ if the asset's evolution diverts from this fixed trajectory at any time $t\leq T$. 

In pricing of this $AD$ derivative, we utilise the $1$-step $AD_{\ua}$ and $AD_{\da}$ derivatives presented in the previous section. As we recall, $AD_{\ua}$ is a derivative that costs $q\n$, and pays us $1\n$ if the value of the underlying asset goes up and $0\n$ otherwise.
Respectively, $AD_{\da}$ is a derivative with price $(1-q)\n$, and with a payoff $1\n$ if the value of the underlying goes down.

The idea of the pricing the $AD$ derivative is to construct a replicating portfolio from $AD_{\ua}$ and $AD_{\da}$ derivatives step-by-step, according to the trajectory related to the $AD$ derivative's payoff.
The construction proceeds from the time of the maturity to time $t=0$, and the idea of the construction is rather simple. Basically, at each time $t<T$, we consider the coordinate of the trajectory, $\omega_{t+1}$, that is, the movement of the underlying at that time. At time $t$
\begin{itemize}
\item  if $\omega_{t+1}=1$, i.e. the prescribed trajectory goes up, we synthesize a suitable number of shares of $AD_\ua$ derivatives; and
\item if $\omega_{t+1}=0$, i.e. the prescribed trajectory goes down, we synthesize a suitable number of shares of $AD_\da$ derivatives.
\end{itemize}
This hedge at time $t$ will provide us a suitable return at time $t+1$ so that we can perform the appropriate hedge at the following time steps as well. Repeating this step-by-step hedging strategy will provide us the static hedging portfolio, the discounted value of this portfolio, and, as the result, the discounted price of the $AD$ derivative.

Let us take an example by pricing the path-dependent $AD$ derivative and the fixed trajectory presented in Figure \ref{figure:ADderivative} below.

\begin{figure}[h!]
\begin{center}
\includegraphics[width=11.3cm]{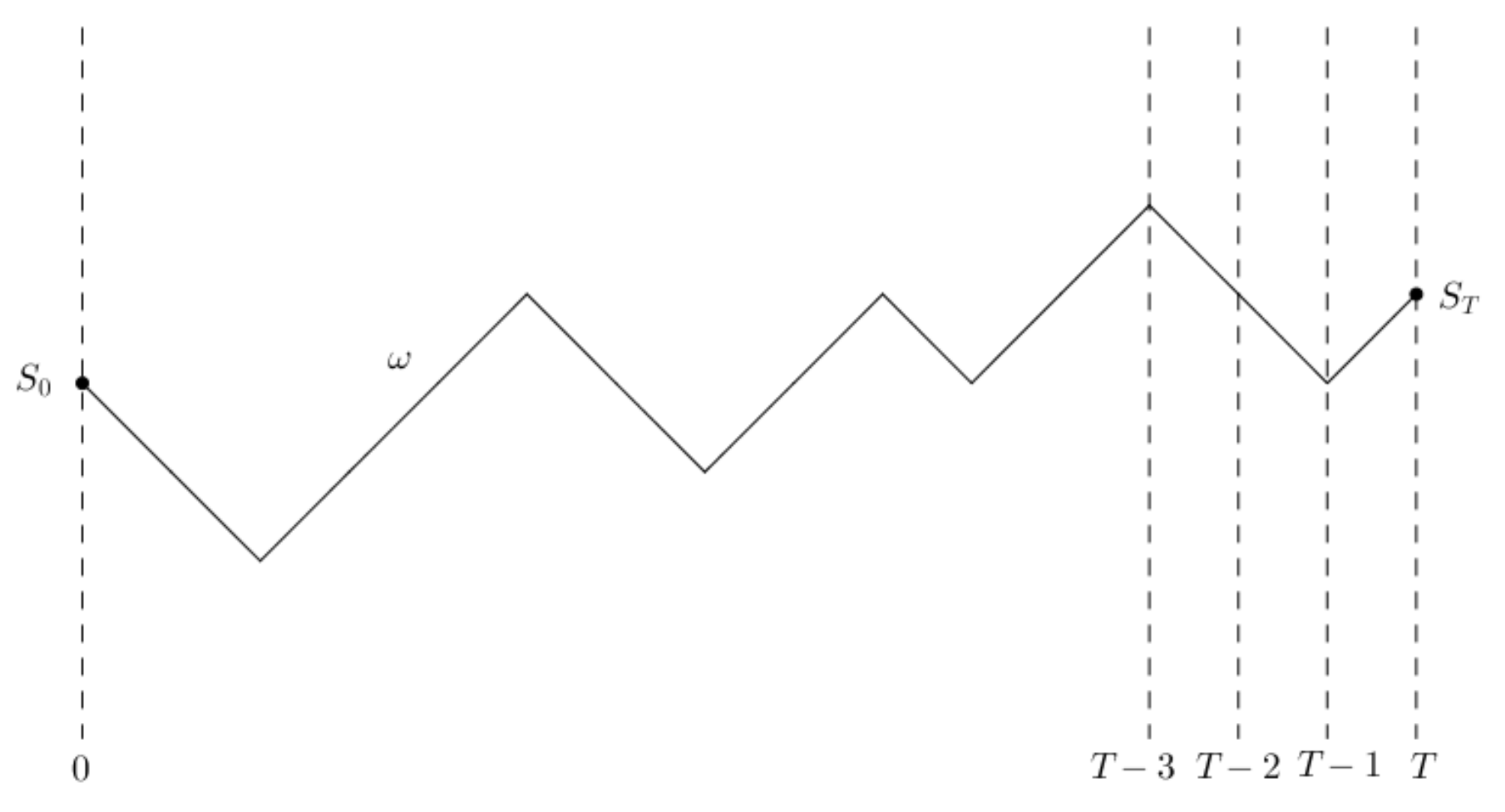}
\caption{The fixed path in the example of pricing a path-dependent $AD$ derivative in Log-scale.}
\label{figure:ADderivative}
\end{center}
\end{figure}

Let us begin the hedging by considering the situation right before the maturity, at time $T-1$. If we are not on the trajectory, then no wealth is required to cover the path-dependent $AD$, since it is worthless. So, let us assume we are on the trajectory. At time $T$ we want the hedging portfolio to pay us $1\n$ if the underlying stock has the fixed value $S_T(\omega)$. Since we know that we are on the trajectory, the stock satisfies $S_T=S_{T-1}(1+u)$, as in the Figure \ref{figure:ADderivative}. Hence, we can (super)hedge the derivative by buying an $AD_{\ua}$ derivative which costs us $q\n$.

Let us then consider the situation two periods before the maturity, at time $T-2$. The situation is almost the same as above but instead of getting $1\n$ from the hedging portfolio at the next period, we want the portfolio to pay us $q\n$. Indeed, with this amount of wealth we can run the previously described hedge at time $T-1$, so that it will, in turn, pay us $1\n$ at the time of the maturity. Again, let us assume that the stock now satisfies $S_{T-1}=S_{T-2}(1+d)$. Hence, we can hedge the derivative by buying $q$ shares of $AD_{\da}$ derivatives; these will pay us $1\n$ each, so that at time $T-1$ we will have $q\n$. Thus, at time $T-2$, the required wealth is $q(1-q)\n$.

With similar reasoning, considering the time $T-3$, we require the wealth to buy $q(1-q)$ shares of $AD_{\da}$ derivatives, i.e. at time $T-3$ we require the wealth $q(1-q)^2$, in order to enable the latter phases of the hedging strategy.

We can continue this backward recursion step-by-step so at time $t=0$ we will have the price of the $AD$ derivative, i.e. the amount of wealth required to initiate the strategy. The required initial wealth of the $AD$ derivative replication 
strategy is
\begin{equation}
\label{Eq:ADprice}
C_{\omega}(0) \n_{(0)} =q^x(1-q)^{T-x} \n_{(T)}
\end{equation}
where $x$ and $T-x$ denote the number of `ups' and `downs', respectively, in the fixed trajectory $\omega$, or equivalently, the number of phases where we use single-step $AD_{\ua}$ and $AD_{\da}$ derivatives, respectively. 

We need to bear in mind that if the value of the underlying asset diverts from the fixed path at any time $t\leq T$, the path-dependent $AD$ derivative is worthless, and therefore, the price of it is $0\n$, and also the hedging strategy ends there. On the other hand, if the evolution follows the fixed trajectory, then the hedging strategy returns $1\n$.
Consequently, the described hedging strategy yields exactly the same payoff as the path-dependent $AD$ derivative.
 
According to the static hedging principle we may construct any path-dependent derivative in the model by aggregating it as 
a suitable portfolio of $AD$ securities $C_\omega$. Namely, if the payoff involving trajectory $\omega$ is $f(\omega)$, then
we may accomplish this in the portfolio by including $f(\omega)$-many $AD$ securities $C_\omega$. Thus, for each $\omega$ the weight of an $AD$ security $C_\omega$ is $f(\omega)$. The price of the path-dependent option $C_f$ is then 
\begin{equation}\label{eq: pathdep}
C_f (0) \n_{(0)} = \sum_\omega f(\omega) C_\omega (0) \n_{(0)} = \sum_\omega f(\omega) q^{x(\omega)} 
(1-q)^{T-x(\omega)} \n_{(T)}.
\end{equation}

At the end of this section we will discuss pricing a \emph{path-independent} European call option using the hedging strategy described here, see Example \ref{example: }.

\subsubsection{The CRR pricing formula by considering combinations of digital options} 

Let us then discuss pricing a general European style derivative with a payoff function $f$. 
These can be easily replicated by using degenerate digital options as building blocks. These in turn can be 
constructed by aggregating $AD$ securities.

In Section \ref{Subsection:path-dependentAD} we have described, by using $AD_{\ua}$ and $AD_{\da}$ derivatives 
step-by-step (Formula \eqref{Eq:ADprice}), how to price a path-dependent $AD$ derivative that pays us $1\n$ at the time of the maturity if a given trajectory occurs.

Since we are now considering a path-independent option, we construct the hedging portfolio using path-independent digital options. The digital option payoff is 
\[f_{digi, K}\ =\ 1_{S_T=K} \]
where $S_T=S_0(1+u)^x(1+d)^{T-x}$; it is irrelevant which particular path the value of the underlying stock follows. Such a digital option can be aggregated from all such path-dependent $AD$ derivatives which follow some trajectory containing exactly $x_0$ `up'-moves. Therefore, 
\begin{equation}\label{eq: Cdigi}
C_{digi, K}(0) \n =\binom{T}{x_0 }q^{x_0} (1-q)^{T-x_{0}} \n.
\end{equation}
Here $K=S_0 (1+u)^{x_0} (1+d)^{T-x_0 }$ and the binomial coefficient $\binom{T}{x_0 }$ is the number of different paths that consist of exactly $x_0$ `up'-moves, i.e. the number of ways how the $x_0$ `up'-moves can be ordered in the paths in question. 

Let us next study a European-style payoff function $f$.
Clearly, the value of a portfolio of $f(K)$-many $C_{digi, K}$ options,  
at time $t=0$ and $t=T$ are
\[
f(K)\binom{T}{x_0 }q^{x_0} (1-q)^{T-x_{0}}\n_{(0)},\quad f(K)1_{S_T=K}\n_{(T)},
\]
respectively. Therefore a general European-style option payoff $f$ can be matched in a simple manner
by a portfolio $\pi$ of digital options in such a way that $f$ and the payoff of the portfolio coincide exactly at maturity $T$.
According to the static hedging principle at time $t=0$ the price of the derivative equals the portfolio value:  
\[
  V_\pi  (0) \n = C_f(0) \n =\sum_{x=0}^{T}f\left(S_0 (1+u)^x (1+d)^{T-x} \right)\binom{T}{x}q^x (1-q)^{T-x} \n
\]
which is essentially the well-known CRR pricing formula \eqref{eq: CRRformula}.

\begin{example}\label{example: }
Let us consider a $2$-step model and suppose that we want to price a European call option with a strike price $K=105\n_{(2)}$, and thus with a payoff $f(\omega)=\max(S_2-105,0)$. Let $r=0.04$ and let the value process of the underlying satisfy $S_0=100$, $u=0.2$, $d=-0.1$, and $p=1/2$, as in Figure \ref{figure:example}.

\begin{figure}[h!]
\begin{center}
\includegraphics[width=6cm]{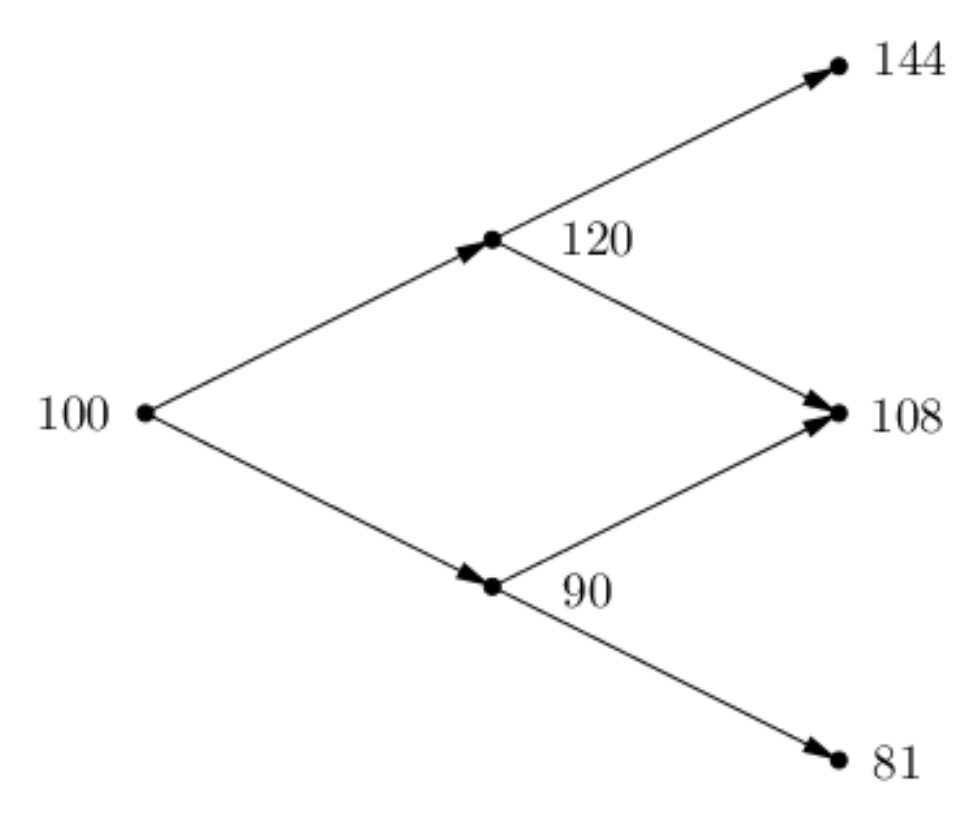}
\caption{Value process of the underlying asset drawn in Log-scale.}
\label{figure:example}
\end{center}
\end{figure}

We shall consider each possible state $S_2$ individually and construct  hedging portfolios for these states by using $AD_\ua$ and $AD_\da$ derivatives.

\begin{itemize}
\item [$1^\circ$] Let us start with the end state having the most obvious hedging strategy, $S_2=S_0(1+d)^2=81\n_{(2)}$. In this case the value of the call is $C_f\n_{(2)}=f(0,0)\n_{(2)}=\max(81-105,0)\n_{(2)}=0 \n_{(2)}$, and since the call is worthless, we need no initial capital to hedge it.

\item [$2^\circ$] Let us then consider the end state $S_2=S_0(1+u)^2$. The only trajectory $\omega$ from $S_0$ to the state $S_2 =144$ is $\omega=(1,1)$. Let us construct the hedging portfolio $\pi$. At the time of the maturity  we want the portfolio to be worth of $f(1,1)\n_{(2)}$. To accomplish this, at time $t=1$ we shall buy $f(1,1)$ shares of $AD_\ua$ derivatives. Each of these will pay us $1\n_{(2)}$ at time $t=2$, so in total our portfolio will have the desired value. These derivatives cost us $f(1,1)q\n_{(1)}$. With similar reasoning, at time $t=0$ we shall buy $f(1,1)q$ shares of $AD_\ua$ derivatives, which will cost us $f(1,1)q^2\n_{(0)}$.

\item [$3^\circ$] Let us consider the remaining state, $S_2=S_0(1+u)(1+d)=108$. Now there are two possible trajectories $\omega_1$ and $\omega_2$ from $S_0$ to $S_2$; these are $\omega_1=(1,0)$ and $\omega_2=(0,1)$. Again, let us construct the hedging portfolio such that at the time of the maturity it has the value $f(1,0)\n_{(2)}=f(0,1)\n_{(2)}$. At time $t=1$, the underlying can satisfy either $S_2=S_1(1+d)$ or $S_2=S_1(1+u)$. In the former case, at time $t=1$ we shall buy $f(1,0)$ shares of $AD_\da$ derivatives, which will cost us $f(1,0)(1-q)\n_{(1)}$. In the latter case, at time $t=1$ we shall buy $f(1,0)$ shares of $AD_\ua$ derivatives, which will cost us $f(1,0)q\n_{(1)}$. 

At time $t=0$ we need to provide for both possibilities for the evolution of the underlying asset, that is, we shall purchase both $f(1,0)(1-q)$ shares of $AD_\ua$ derivatives and $f(1,0)q$ shares of $AD_\da$ derivatives. These will cost us a total of
\[
f(1,0)(1-q)q+f(1,0)q(1-q)\ \n_{(0)}=2f(1,0)q(1-q)\ \n_{(0)}.
\]
\end{itemize}

Finally, we will have the hedging portfolio for the call as the sum of the above, that is
\[
C_f\n_{(0)}=\left(0+f(1,1)q^2+2f(1,0)q(1-q)\right)\n_{(0)}.
\]
After substituting $q=(r-d)/(u-d)=(0.04-(-0.1))/(0.2-(-0.1))=7/15$ and the call's payoffs, we will have the price of the call as
\[
C_f\n_{(0)}=\left(0+39\cdot\left(\frac{7}{15}\right)^2+2\cdot3\cdot\left(\frac{7}{15}\right)\left(\frac{8}{15}\right)\right)\n_{(0)}\approx 9.99 \n_{(0)}
\]
which coincides with the CRR price, see \eqref{eq: CRRformula}.

\end{example}

\section{An alternative route to the CRR formula:\\ Extending the state space and reversing the random walk}
\label{Subsection:extendedCRR}

Let us first extend the state space as
\begin{equation}
\label{eq:statespace}
 \text{States}: \ S_0(1+u)^k(1+d)^l ,\ k,l =0,1,2,\ldots
\end{equation}
so that it includes an infinite number of states. Here $S_0$ is fixed.

We wish again to price a European style derivative with a payoff function $f$. 
By using the static hedging principle seen previously we may accomplish this by constructing a portfolio 
with an end state $K=S_T$:
\[C_f(0) \n =\sum_{x=0}^{T}f(K ) C_{digi,K} (0) \n .\]
Thus it suffices to price each individual $C_{digi,K}$ option separately. Of course, by now we know to expect
the form \eqref{eq: Cdigi}.

\subsection{Backward recursion on degenerate digital option}
\label{Subsection:CRRdetails}
 
Let us begin the construction of this derivative $C_{digi, K}$ at the time of the maturity, $t=T$. At time $t=T$ we wish to receive $1 \n$ if the underlying asset has the value of $K$ and $0 \n$ otherwise.

\begin{figure}[h!]
\begin{center}
\includegraphics[width=11.3cm]{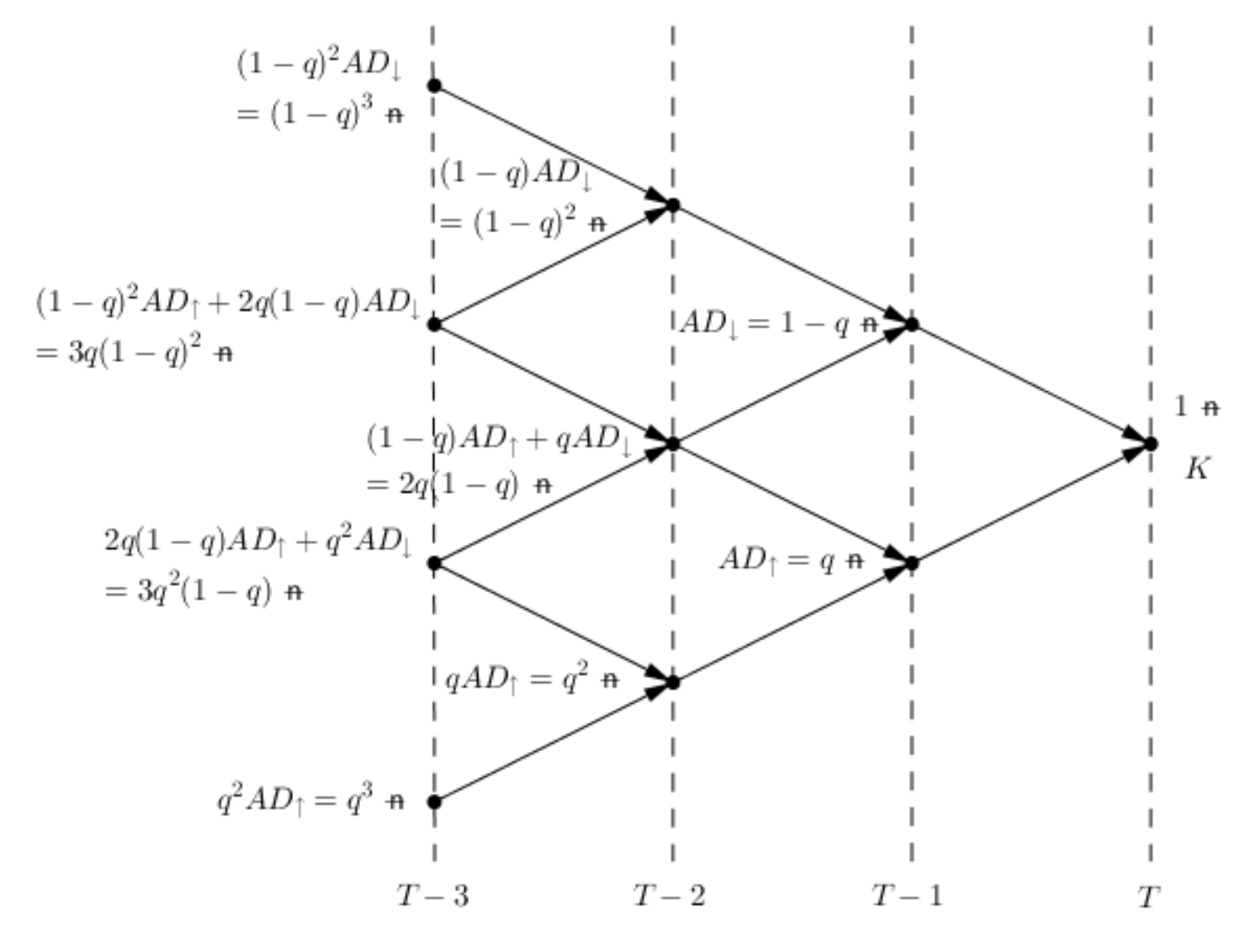}
\caption{First steps of constructing the required degenrate digital option from $AD_{\ua}$ and $AD_{\da}$ derivatives (drawn in Log-scale).}
\label{figure:firststeps}
\end{center}
\end{figure}

{\noindent \emph{At time $t=T-1$}}: There are two possible states of $S_{T-1}$ that enable the $1\n$ payoff at time $T$; these cases are $S_T=S_{T-1}(1+d)$ and  $S_T=S_{T-1}(1+u)$. If $S_T=S_{T-1}(1+d)$, at time $T-1$ we require the wealth to purchase or construct an $AD_{\da}$ derivative which will pay us the desired $1 \n$ at time $T$, i.e. we require $1-q  \n_{(T-1)}$. Respectively, if $S_T=S_{T-1}(1+u)$, at time $T-1$ we require the wealth to obtain an $AD_{\ua}$ derivative which will pay us the desired $1 \n$ at time $T$, i.e. we require the wealth $q \n_{(T-1)}$.
\ \\ 

{\noindent \emph{At time $t=T-2$}}: Similarly, now there are three possible states of $S_{T-2}$
that can enable the $1\n$ payoff at time $T$, via the two states described above. These cases are 
\[S_T=S_{T-2}(1+d)^2,\quad  S_T=S_{T-2}(1+u)^2 ,\quad  S_T=S_{T-2}(1+u)(1+d) .\] 
If $S_T=S_{T-2}(1+d)^2$, at time $T-2$ we require the wealth to buy $1-q$ shares of \ $AD_{\da}$ derivatives (these derivatives will pay us $1 \n$ each at time $T-1$, so at time $T-1$ we will have $(1-q) \n$ which is the amount of wealth that assures obtaining the payoff $1\n$ at time $T$). Thus we require $(1-q)^2 \n$. If $S_T=S_{T-2}(1+u)^2$, with similar reasoning, at time $T-2$ we need to have $q^2 \n$ to assure obtaining the $1 \n$ at time of the maturity. If $S_T=S_{T-2}(1+u)(1+d)=S_{T-2}(1+d)(1+u)$, at time $T-2$ we require the wealth to buy both $1-q$ shares of $AD_{\ua}$ derivatives and $q$ shares of $AD_{\da}$ derivatives. Since we cannot predict which state of nature will occur at time $T-1$, we need to provide for both. Thus we require the wealth $((1-q)q+q(1-q)) \n =2q(1-q) \n$.
 
We can proceed this replicating strategy step by step from the time of the maturity to the beginning, time $t=0$. 
The first steps of constructing the hedging portfolio are represented in Figure \ref{figure:firststeps}.
 
\subsubsection{Backward random walk interpretation} 
 
Note that at each point in the state space we have essentially the same discounted value process; this is represented in Figure \ref{figure:valueprocess}.

\begin{figure}[h!]
\begin{center}
\includegraphics[width=11.3cm]{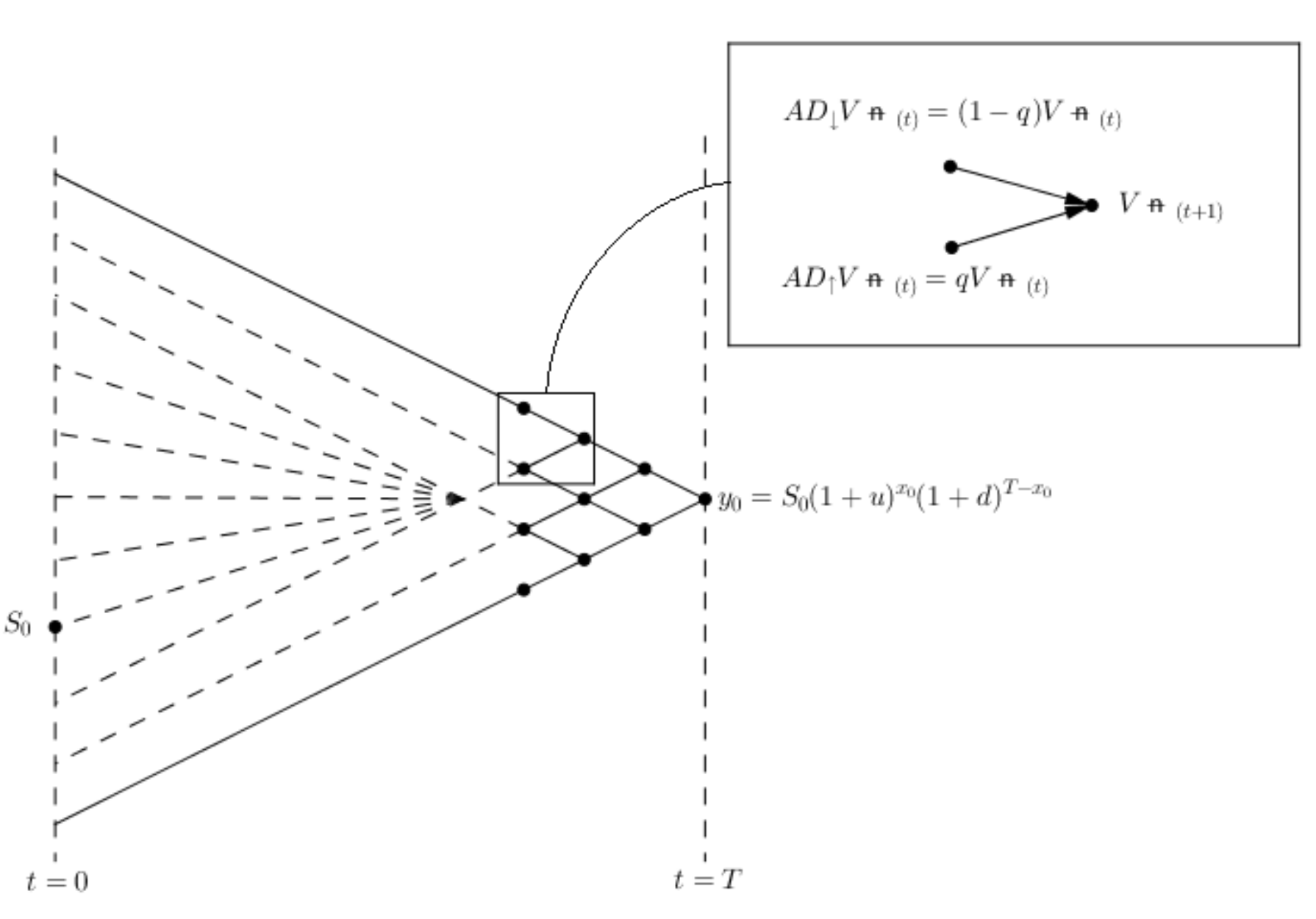}
\caption{Value process of the underlying asset $S$. At each point in the state space, the discounted value process is essentially the same, represented in the ''zoomed-in'' box.}
\label{figure:valueprocess}
\end{center}
\end{figure}


 Let us define a random walk $Y$ (starting from the strike price $K$ of the digital option) as
 \begin{equation*}
 \begin{split}
     Y_0&=y_0 =K\\
     Y_{t+1}&=(1+u)^{\xi_{t+1}}(1+d)^{1-\xi_{t+1}}Y_t
     \end{split}
 \end{equation*}
 where $\xi$ is a biased `coin flip process', i.e. $\xi=(\xi_t)_{t\leq T}$ such that $\xi_t(\omega)=\omega_t$ are independent and identically distributed (i.i.d.) with
 \[
 \mathbf{P}(\xi_t=1)=1-q \hspace{5mm} \text{and} \hspace{5mm} \mathbf{P}(\xi_t=0)=q.
 \]
 Thus, here we consider $q$ as a probability of the event $\xi_t=0$. This random walk can be depicted traveling backward in time as follows.


Now the price $C_{digi, K}(0)\n$ is numerically the probability that $Y$ hits $S_0$,
 \[
 \mathbf{P}(Y_T=S_0)=\mathbf{P}\left(\sum_{t=1}^T (1-\xi_t)=x\right)=\binom{T}{x}q^x(1-q)^{T-x}.
 \]
Here $\xi_t$ is binomially distributed and therefore $1-\xi_t$ is also binomially distributed. The required probability is obtained by the probability mass function of a binomially distributed random variable. Also note that $1-\xi_t$ satisfies 
\[\mathbf{P}((1-\xi_t)=0)=\mathbf{P}(\xi_t=1)=1-q\quad \text{and}\quad \mathbf{P}((1-\xi_t)=1)=\mathbf{P}(\xi_t=0)=q.\]
The reason we require here some form of extension of the state space is to enable the branching of the backward random walk.

\section{Some further remarks}

\subsection{An invariance property in the extended CRR model}

In the extended CRR model in Section \ref{Subsection:extendedCRR} for each time $t$ 
the probabilities of the backward process sum to the unity over the states, cf. Figure \ref{figure:prop1}. In other words, the replication $\n$ values 
of a $C_{digi ,K}$ option over the states do not depend on time, as the aggregate is $1$. 

\begin{figure}[h!]
\begin{center}
\includegraphics[width=7cm]{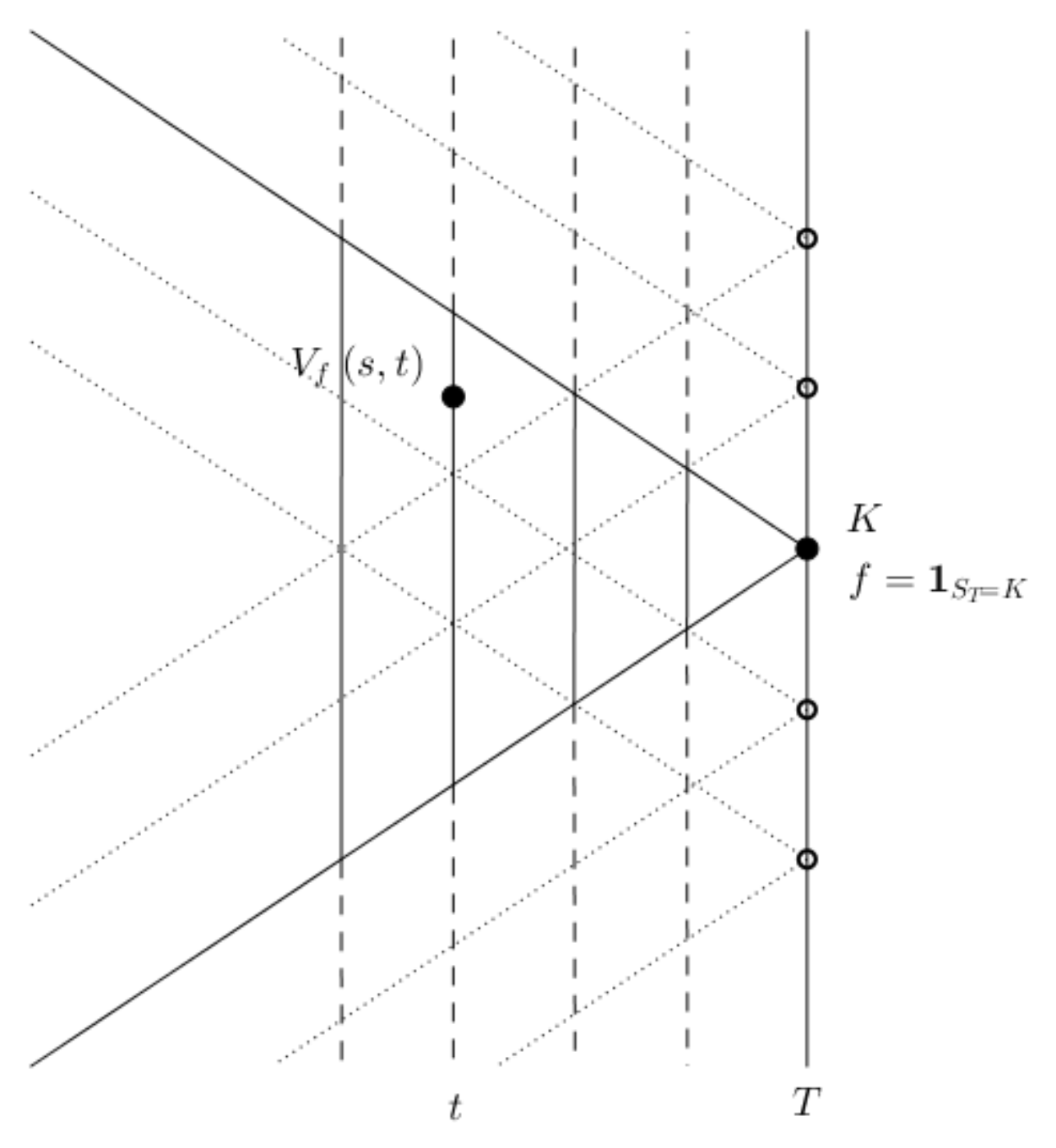}
\caption{The value process corresponding to the strike price $K$.}
\label{figure:prop1}
\end{center}
\end{figure}

Taking this observation a bit further, if we statically hedge European-style derivatives by aggregating degenerate digital options, we also aggregate with respective weights the probabilities of the processes starting from different states at the maturity.

Recall the extended state space in \eqref{eq:statespace}, and let $V_f(s,t)$ be the payoff's value corresponding the state $s$ and the time $t$. Let us write 
\[
V_f(s,t)=\sum_K V_f^{(K)}(s,t),
\]
where $V_f^{(K)}(s,t)$ is the value process corresponding to a degenerate digital option with a payoff $f(K)\cdot 1_{S_T=K}$.

\begin{proposition}\label{prop: aggregate}
Consider the model in Section \ref{Subsection:extendedCRR} and let $f$ be the payoff of a European-style derivative such that 
\[
\sum_K |f(K)|<\infty\ \text{or}\ f\geq 0 .
\]
Then the value process of the replication strategy for $C_f$ satisfies
\[
\sum_{s} V_f (s,t) \n_{(t)} =\sum_K f(K) \n_{(T)}\ \ \text{for\ all}\ \ t\leq T .
\]
\end{proposition}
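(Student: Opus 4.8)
The plan is to decompose the state-aggregate of $V_f$ into the contributions of the individual digital options, exploit the invariance already recorded just before the statement — that for each fixed strike $K$ the backward-walk probabilities sum to one over the states — and then interchange the order of the two summations. The invariance for a single digital option is the crux; everything else is bookkeeping, and the only genuinely delicate point is the legitimacy of the interchange over the now-infinite state space.

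First I would recall the backward random walk interpretation of Section \ref{Subsection:CRRdetails}. For the degenerate digital option with weighted payoff $f(K)\cdot 1_{S_T=K}$, the replication value at state $s$ and time $t$ is $f(K)$ times the probability that the backward walk $Y$, started at $Y_0=K$, occupies state $s$ after $T-t$ reverse steps, that is
\[
V_f^{(K)}(s,t)\,\n_{(t)} = f(K)\,\mathbf{P}(Y_{T-t}=s)\,\n_{(T)} .
\]
Summing over all states $s$ in the extended state space \eqref{eq:statespace} and using that $Y_{T-t}$ lands somewhere with total probability one (its law is binomial, and $q+(1-q)=1$), I obtain the per-strike invariance
\[
\sum_s V_f^{(K)}(s,t)\,\n_{(t)} = f(K)\,\n_{(T)}\qquad\text{for every } t\leq T,
\]
which is visibly independent of $t$. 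This is exactly the aggregate-$=1$ observation applied to one digital option, rescaled by the weight $f(K)$.

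It then remains to sum over $K$ and to justify swapping $\sum_s$ with $\sum_K$:
\[
\sum_s V_f(s,t)\,\n_{(t)} = \sum_s\sum_K V_f^{(K)}(s,t)\,\n_{(t)} = \sum_K\sum_s V_f^{(K)}(s,t)\,\n_{(t)} = \sum_K f(K)\,\n_{(T)} .
\]
The hard part is precisely this interchange, because with infinitely many states the double series need not converge unconditionally for an arbitrary payoff; this is where the two hypotheses enter. If $f\geq 0$, every term $V_f^{(K)}(s,t)$ is nonnegative and Tonelli's theorem for nonnegative double series licenses the swap unconditionally, both sides being equal as elements of $[0,\infty]$. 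If instead $\sum_K|f(K)|<\infty$, then applying the per-strike identity to $|f|$ gives $\sum_K\sum_s |V_f^{(K)}(s,t)|\,\n_{(t)} = \sum_K |f(K)|\,\n_{(T)}<\infty$, which simultaneously shows that $V_f(s,t)=\sum_K V_f^{(K)}(s,t)$ is well defined and supplies the absolute convergence needed for Fubini's theorem to validate the rearrangement. In either case the displayed chain of equalities holds for all $t\leq T$, which is the assertion.
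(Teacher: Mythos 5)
Your proposal is correct and follows essentially the same route as the paper's proof: decompose $V_f$ into the digital-option contributions $V_f^{(K)}$, use that the backward-walk probabilities for a fixed strike $K$ sum to one over the extended state space (so $\sum_s V_f^{(K)}(s,t)\n_{(t)} = f(K)\n_{(T)}$), and interchange $\sum_s$ with $\sum_K$. The only difference is that you spell out the Tonelli/Fubini justification for the interchange, which the paper compresses into ``according to the assumptions we may change the order of summations''; this is a welcome elaboration, not a different argument.
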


\begin{proof}
\[
\sum_{s} V_f (s,t) \n_{(t)} =\sum_s\sum_K V_f^{(K)}(s,t) \n_{(t)} = \sum_K \sum_s V_f^{(K)}(s,t) \n_{(t)} 
=\sum_K f(K) \n_{(T)}.
\]
According to the assumptions we may change the order of summations in the middle equality. The last equality holds because if the payoff of the security is $1$, then the possible values at time $t$, which coincide with the probabilities, sum to $1$ (see Figure \ref{figure:prop1}). Similarly, since the degenerate digital option has the payoff $f(K)$, the possible values at time $t$ sum to $1\cdot f(K)$.
\end{proof}

The standard CRR model fails the above property. Namely, consider at time $t=0$ the one and only state $S_0$ and our 
derivative in this example is trivially the risk-free bond with maturity at $T$. Then the $\n$-value of the bond is $1$ but the sum $\sum_K f(K) = \sum_K 1$ becomes large, it is the number of all possible states at time $T$. 

On the other hand, the BSM model has the similar property, namely
\[\int_{-\infty}^\infty  \int_{-\infty}^\infty f(e^y) \varphi(y-x)\ dy\ dx =  \int_{-\infty}^\infty f(e^y) 
\int_{-\infty}^\infty  \varphi(y-x)\ dx\ dy\ =\int_{-\infty}^\infty f(e^y)\ dy\]
where we used $y=\ln S_T$, $x=\ln S_0$, $\varphi$ is the risk-neutral density function of the BSM model.
Recall that the distribution is $\mathcal{N}((r- \frac{1}{2}\sigma^2)T, \sigma^2 T)$ with the relevant model parameters in place.

In what follows, we will provide an example of a situation where this invariance property has an interesting implication.  Let us consider a digital option with a payoff function
\[
f(\omega)=1_{K_1\leq S_T \leq K_2},
\]
i.e. the digital option pays $1 \n$  if the underlying hits the interval $[K_1,K_2]$, and $0\n$ otherwise. We assume here that the strike $K$ can have only discrete values.
As stated previously, the values $V_f^{(K)}(s,t)$ corresponding to the strike $K$ sum to $1$ at any time $t\leq T$. Applying this property, by summing all the possible values $V_f^{(K)}(s,t)$ corresponding to all strikes $K$, such that $K_1\leq K\leq K_2$, at time $t$, we can conclude that the sum must equal the number of possible strikes in the interval $[K_1,K_2]$.
%

\renewcommand{\t}{\ \!\! \vartriangle\! (t)\ \!\! } 
\subsection{Why the trend term $\mu$ does not appear in the BSM prices?}
The fact that the trend term $\mu$ does not affect prices in the BSM pricing appears rather counterintuitive.
There are some anecdotes on how the pricing formulas were suspected before the seminal paper of Black and Scholes was published and even the authors first doubted their findings. 

We will discuss here the irrelevance of $\mu$ in the BSM model, as seen from the lattice model asymptotics along vanishing step size. 
The $\mu$ parameter cannot be excluded in the binomial framework in the formation of the risk-neutral probabilities $q$. 
On the other hand, the effect of $\mu$ should vanish as the time-scale is refined and the binomial models converge to a BSM model (in a suitable sense). Next we will analyze the \emph{speed of convergence} of the risk-neutral variance of the underlying binomial process. Recall that the asymptotic log-Normal state-price density
can be recovered in principle by normal approximation of the binomial distribution  
from the risk-neutral expectation (see \eqref{eq: R} below) and variance of the jumps, since they are i.i.d.

To this end we will fix the following dependence of returns on the parameters:
\[U = e^{\mu \t + \sqrt{\t}\sigma},\quad D = e^{\mu \t - \sqrt{\t}\sigma},\quad R=e^{r \t}.\]
Here we have time step $\t$ and 
the usual BSM model parameters, $\mu>0$ is the trend of the underlying and $\sigma>0$ the standard deviation or volatility term and 
$r>0$ the short rate. Some reasonable values could be $\mu=0.1 ,\ \sigma=0.2$ and $r=0.04$.
Mimicing the BSM model, the lattice model of the underlying asset is 
\[S_{t+1} = S_t  e^{\mu \t + \sqrt{\t}\sigma \theta_t} \]
where $\theta_t$ are i.i.d. random variables with $\P(\theta=1)=\P(\theta=-1)=\frac{1}{2}$.
      
Here we will use the following risk-neutral single-step probabilities as above:
\[q=\frac{R-D}{U-D},\quad (1-q)=\frac{U-R}{U-D}.\]
Then, by simple algebra we obtain the following identities:
\begin{equation}\label{eq: R}
q U + (1-q) D =\frac{R-D}{U-D}U + \frac{U-R}{U-D}D  =R,
\end{equation}
\begin{equation}\label{eq: U2}
\frac{R-D}{U-D}U^2 + \frac{U-R}{U-D}D^2  = RU + RD -UD.
\end{equation}
Equation \eqref{eq: R} says that in the risk-neutral world the expected return of the underlying asset is the risk-free return and in particular
does not depend on $\mu$.

The risk-neutral single-step variance of the asset return is
\[\frac{R-D}{U-D}(U-R)^2 + \frac{U-R}{U-D}(D-R)^2 \] 
and the risk-neutral variance of $\log(S_T / S_0 )$ for small $\t$ is approximately
\begin{equation}\label{eq: TVar}
\frac{T}{\t} \left(\frac{R-D}{U-D}(U-R)^2 + \frac{U-R}{U-D}(D-R)^2 \right).
\end{equation}
where the $\frac{T}{\t} $ is the total number of steps in the time span. Indeed, we apply the fact that 
$\mathrm{Var}\ e^{\t \theta} \approx \t\ \mathrm{Var}\ \theta$ for small $\t$.

This reads
\begin{eqnarray*}
\mathrm{Var}_\mathbf{Q}\ \log(S_T / S_0 ) &\approx& \frac{T}{\t} \left(\frac{R-D}{U-D} (U^2 - 2RU + R^2 ) +  \frac{U-R}{U-D} (D^2 - 2RD + R^2 )  \right) \\
&=& \frac{T}{\t} \left(\frac{R-D}{U-D} U^2  +  \frac{U-R}{U-D} D^2  - R^2   \right) \\
&=& \frac{T}{\t} (RU + RD -UD - R^2 )
\end{eqnarray*}
where the second equality follows from \eqref{eq: R} and by thinking of the risk-neutral expectations, and the last one from \eqref{eq: U2}.

To analyse the contribution of $\mu$ on the risk-neutral variance, we shall analyse the Taylor expansion of the above terms:
\begin{eqnarray*}
&&RU + RD - UD - R^2\\ 
&=&  e^{\mu \t + \sqrt{\t}\sigma + r\t } + e^{\mu \t - \sqrt{\t}\sigma + r\t } - 
e^{2\mu \t} - e^{2r\t}\\
&=& 1+ (\mu \t + \sqrt{\t}\sigma + r\t ) +\frac{1}{2} (\mu \t + \sqrt{\t}\sigma + r\t )^2\\ 
&+& \frac{1}{6} (\mu \t + \sqrt{\t}\sigma + r\t )^3  \ldots\\
&+& 1+ (\mu \t - \sqrt{\t}\sigma + r\t ) +\frac{1}{2} (\mu \t - \sqrt{\t}\sigma + r\t )^2\\ 
&+& \frac{1}{6} (\mu \t - \sqrt{\t}\sigma + r\t )^3 \ldots\\
&-&  1 - 2\mu \t  - \frac{1}{2}( 2\mu \t )^2  -\frac{1}{6}( 2\mu \t )^3  - \ldots \\
&-&  1 - 2r \t - \frac{1}{2}( 2r \t )^2  - \frac{1}{6}( 2r \t )^3 \ldots \\
&=& \t \sigma^2  + (\t)^2 (\mu + r) \sigma^2 + \ldots
\end{eqnarray*}

Consequently, we obtain an approximation for the risk-neutral variance:
\begin{eqnarray*}
\mathrm{Var}_\mathbf{Q}\ \log(S_T / S_0 )&\approx& \frac{T}{\t}\left(\t \sigma^2 + (\t)^2 (\mu + r) \sigma^2 \right)\\
&=& T \sigma^2 \left(1  + (\mu + r) \t\right) .
\end{eqnarray*}
From this form we immediately see the BSM model variance which arises asymptotically as $\t \to 0$:
\begin{equation}\label{eq: VarBSM}
\mathrm{Var}^{\mathrm{BSM}}_\mathbf{Q}\ \log(S_T / S_0 ) = T \sigma^2.
\end{equation}

The conclusion here is that, in an annual binomial model, if the time step is $1$ day ($\t=1/365$) then the effect of $\mu$ and $r$ on the underlying asset's risk-neutral variance is negligible.
According to \eqref{eq: R} and \eqref{eq: VarBSM} parameter $\mu$ does not appear in the $\mathbf{Q}$-distribution which is used in pricing options.

To conclude, we will comment heuristically the vanishing effect of $\mu$ on the risk-neutral probability. 
This is a bit of a paradox and the above calculations give only dim insight into what is `really' happening here. 

Fixing small $\t$ we have 
\[q=\frac{R-D}{U-D} \approx \frac{r\t -(\mu\t - \sqrt{\t}\sigma)}{\sqrt{\t}2\sigma}.\]

Thus, the effect of changes in $\mu$ on $q$ is 
\begin{equation}\label{eq: vartriq}
\vartriangle q = - \frac{\sqrt{\t} \vartriangle \mu}{2\sigma}.
\end{equation}

This means that as $\mu$ increases the risk-neutral probability mass shifts down in the tree. 

So, how can the the risk-neutral probability measure value $\mathbf{Q}(S_T)$ be asymptotically invariant of $\mu$?
Clearly the above sensitivity \eqref{eq: vartriq} decreases for small time scale. 

Also, note that the risk-neutral density concerns explicitly the value of $S_T$ and not the number of up jumps. Recall that in \eqref{eq: CRRformula} the value of $S_T$ appears rather indirectly as the number of jumps $x$, so let us write $x=x(S_T)$, the number of up jumps required for a given terminal asset price $S_T$. 

Changing a down jump to an up jump results in $\sqrt{\t} 2 \sigma$ increase in the log-price of the asset.
On the other hand, changing $\mu$ affects uniformly every time-step of the model, so the corresponding change is 
\[\vartriangle \log (S_T) = T \vartriangle \mu .\]
This means that if we wish to counteract the increase of $\mu$ by changing the number of up steps, the required adjustment is
\[\vartriangle x (S_T) =- \frac{T \vartriangle \mu}{\sqrt{\t} 2 \sigma}.\]
Consequently, increasing $\mu$ shifts the risk-neutral probability mass down the tree but it simultaneously shifts down the end node in the tree, corresponding to a fixed value $S_T$.

\section{Conclusions}
In this paper, we provide a leaner, not as technical, proof for the Cox-Ross-Rubinstein pricing formula. We have made an effort to simplify the proof and make it pedagogically more approachable by emphasising the financial intuition behind the pricing formula.

The fundamental idea of our proof is, by using the static hedging argument, to construct a replicating portfolio using Arrow-Debreu securities and digital options. We start this construction from the time of the maturity, and proceed backwards to the time $t=0$. In order to enable the backward recursion, we extend the state space. In this extended CRR model, there exists an interesting invariance property: at each time $t\leq T$, the sum of all possible values of the stock corresponds to the sum of all possible payoffs at maturity.
We show one example where this invariance property can be used in the analysis of financial derivatives. In addition to our example, the invariance property can have various applications in financial mathematics. 

At the end of the paper, we discuss the paradox of the trend parameter $\mu$ not affecting the prices of derivatives. We provide justification for this well-known fact by showing that the risk-neutral density $\mathbf{Q}$, that appears in the pricing formula, is independent of the parameter $\mu$.

\end{document}